\documentclass[letterpaper, 10 pt, conference]{ieeeconf}  % Comment this line out
\IEEEoverridecommandlockouts                              % This command is only
\usepackage{amsmath,amssymb,amsfonts}
\usepackage{algorithmic}
\usepackage{graphicx} 
\usepackage{textcomp}
\usepackage{xcolor}

\usepackage[hidelinks]{hyperref}
\usepackage{enumerate}
\usepackage{multirow}
\usepackage{gensymb}
\usepackage{cite}

\usepackage{booktabs}
\usepackage{colonequals}

\usepackage[font=footnotesize]{caption}
\usepackage{subcaption}

\usepackage{siunitx} % display and formatting of units
\usepackage[capitalize]{cleveref}
\crefname{figure}{Fig.}{figs.}

\usepackage{cuted}
\usepackage{flushend}
\usepackage{accents}

\usepackage{tablefootnote}

\newcommand{\bmat}[1]{\begin{bmatrix}#1\end{bmatrix}}
\newcommand{\sbmat}[1]{\left[\begin{smallmatrix}#1\end{smallmatrix}\right]}
\newcommand{\unbar}[1]{\underaccent{\bar}{#1}}
\newcommand{\defeq}{\colonequals}

\def\BibTeX{{\rm B\kern-.05em{\sc i\kern-.025em b}\kern-.08em
    T\kern-.1667em\lower.7ex\hbox{E}\kern-.125emX}}

\renewcommand{\top}{\mathsf{T}} % prettier symbol for transpose
\newcommand{\tr}[1]{{#1}^{\top}}
\newcommand {\R}{\mathbb{R}}

\newtheorem{theorem}{\textbf{Theorem}}

\newtheorem{definition}{\textbf{Definition}}
\newtheorem{remark}{\textbf{Remark}}
\newtheorem{assumption}{\textbf{Assumption}}
\newtheorem{proposition}{\textbf{Proposition}}

\newcommand{\dd}{\mathrm{d}}
\renewcommand{\epsilon}{\varepsilon}

\title{\LARGE \bf
Near-Optimal Constrained Feedback Control of Nonlinear Systems via Approximate HJB and Control Barrier Functions
}

\author{Milad Alipour Shahraki and Laurent Lessard% <-this % stops a space
	\thanks{M. Alipour Shahraki is with the Department of Electrical and Computer Engineering,
		Northeastern University, Boston, MA 02115, USA
		{\tt\small alipourshahraki.m@northeastern.edu}}%
	\thanks{L. Lessard is with the Department of Mechanical and Industrial Engineering,
		Northeastern University, Boston, MA 02115, USA
		{\tt\small l.lessard@northeastern.edu}}%
}

\begin{document}

\maketitle
\thispagestyle{empty}
\pagestyle{empty}

%%%%%%%%%%%%%%%%%%%%%%%%%%%%%%%%%%%%%%%%%%%%%%%%%%%%%%%%%%%%%%%%%%%%%%%%%%%%%%%%
\begin{abstract}

This paper presents a two-stage framework for constrained near-optimal feedback control of input-affine nonlinear systems. An approximate value function for the unconstrained control problem is computed offline by solving the Hamilton--Jacobi--Bellman equation. Online, a quadratic program is solved that minimizes the associated approximate Hamiltonian subject to safety constraints imposed via control barrier functions. Our proposed architecture decouples performance from constraint enforcement, allowing constraints to be modified online without recomputing the value function. Validation on a linear 2-state 1D hovercraft and a nonlinear 9-state spacecraft attitude control problem demonstrates near-optimal performance relative to open-loop optimal control benchmarks and superior performance compared to control Lyapunov function-based controllers.

\end{abstract}

%%%%%%%%%%%%%%%%%%%%%%%%%%%%%%%%%%%%%%%%%%%%%%%%%%%%%%%%%%%%%%%%%%%%%%%%%%%%%%%%
\section{Introduction}\label{sec:intro}

The optimal feedback control problem for general nonlinear systems remains a central challenge in control theory. The Hamilton--Jacobi--Bellman (HJB) equation provides the theoretical foundation for optimal control; however, its nonlinear partial differential equation (PDE) structure renders analytical solutions generally unavailable, while the curse of dimensionality makes numerical solutions intractable for high-dimensional systems~\cite{lewis2012optimal}. In practice, control systems must also satisfy safety-critical constraints, such as actuator limits, state bounds, and geometric exclusion zones, further complicating the design of optimal controllers.

Model predictive control (MPC) addresses both optimality and safety via receding-horizon optimization, but solving a nonlinear program online limits real-time applicability for fast or high-dimensional systems.

Alternatively, one can address optimality and safety separately. On the optimization side, approximate dynamic programming methods solve relaxed forms of the HJB equation: the Successive Galerkin Approximation (SGA)~\cite{beard1997galerkin} projects the Generalized HJB (GHJB) PDE onto polynomial bases, while Sum-of-Squares (SOS) programming~\cite{summers2013approximate,yang2023approximate} replaces intractable non-negativity conditions with semidefinite constraints. Policy iteration schemes~\cite{jiang2015global,wang2015approximate} then produce monotonically improving value function approximations with global stability guarantees. However, these methods typically address unconstrained problems; embedding constraints into these methods is nontrivial, as it can significantly increase the problem size, require higher-order approximation to maintain feasibility, or render the problem infeasible altogether.

On the constraint enforcement side, Control Barrier Functions (CBFs)~\cite{ames2019control} and their high-order extensions (HOCBFs)~\cite{xiao2021high} translate state constraints into linear inequalities on the control input, enabling real-time enforcement via quadratic programs (QPs). CBFs have been combined with Control Lyapunov Functions (CLFs) in CLF-CBF-QP architectures~\cite{ames2016control} for simultaneous stability and safety. However, as observed in our prior work on spacecraft attitude control~\cite{shahraki2025spacecraft}, CLF-based approaches have several limitations: (i)~CLF-QP objectives guarantee stability but not optimality, and typically require hand-designed Lyapunov functions via input-output linearization~\cite{shahraki2025spacecraft}; (ii)~fixed decay rates cause input chattering at low sampling frequencies~\cite{shahraki2025spacecraft,dihel2024flexible}; and (iii)~the prescribed decay rate introduces a comparison function design problem~\cite{fan2025concave}, where the choice of decay shape governs a feasibility/convergence trade-off further complicated by actuator bounds. All three issues stem from enforcing a pointwise decay rate and require nontrivial design choices.

Several recent works couple safety and performance by incorporating CBF conditions directly into the value function computation~\cite{cohen2020approximate,yazdani2020safety,almubarak2021hjb}, but some guarantee only ultimate boundedness rather than asymptotic stability~\cite{cohen2020approximate}, and all require a full problem re-solve if the safe set changes, eliminating the flexibility to modify constraints online.

This work bridges the gap between the two described lines of work. In contrast to~\cite{cohen2020approximate,yazdani2020safety,almubarak2021hjb}, we propose a two-stage architecture that decouples performance optimization from safety constraint enforcement. The key idea is to replace the hand-designed CLF with a \emph{near-optimal value function} computed offline via policy iteration, and to enforce constraints online through a CBF-QP that requires only the value function gradient and is agnostic to how it was obtained. The contributions of this paper are as follows:

\begin{enumerate}[1.]
\item We propose the \emph{GHJB-CBF-QP}, a convex QP that uses the gradient of a precomputed value function to drive the system toward \emph{near-optimal} trajectories, while CBF constraints enforce state safety and input bounds. Unlike CLF-CBF-QP~\cite{ames2016control,shahraki2025spacecraft}, the objective reflects optimality rather than mere stability, and the controller is inherently chatter-free: it avoids input-output linearization entirely and imposes no CLF decay rate constraint, eliminating all three CLF limitations discussed above.
\item We characterize the optimality gap, decomposing it into three terms: value function approximation error, constraint projection error, and CBF conservatism, each admitting a separate mitigation strategy, and show that the CBF conservatism term vanishes for integrator-state constraints (relative degree one with zero drift).
\item We validate the framework on a 1D hovercraft (linear, 2~states) and a spacecraft attitude control problem (nonlinear, 9~states), demonstrating near-optimal performance relative to open-loop optimal control and superior performance compared to CLF-CBF-QP controllers with real-time implementability and formal safety guarantees.
\end{enumerate}

The remainder of this paper is organized as follows. \Cref{sec:problem} formulates the problem and presents the offline value function approximation via policy iteration. \Cref{sec:ghjb_cbf} develops the online GHJB-CBF-QP controller, including its optimality and safety properties. \Cref{sec:examples} validates the framework on linear and nonlinear examples, and \Cref{sec:conclusion} concludes with future research directions.

%%%%%%%%%%%%%%%%%%%%%%%%%%%%%%%%%%%%%%%%%%%%%%%%%%%%%%%%%%%%%%%%%%%%%%%%%%%%%%%%
\section{Problem Formulation}\label{sec:problem}

\subsection{System Dynamics and Objectives}

Consider a nonlinear \emph{input-affine} control system
\begin{equation}\label{eq:dynamics}
\dot{x} = f(x) + g(x)\, u, \quad x(0)=x_0,
\end{equation}
where $x(t) \in \R^n$ is the state, $u(t) \in \R^m$ is the control input, and $f:\R^n \to \R^n$, $g:\R^n \to \R^{n \times m}$ are assumed to be locally Lipschitz continuous with $f(0)=0$. The system is subject to state constraints $h_i(x) \leq 0$, $i=1,\ldots,n_h$ and input constraints $u \in \mathcal{U} \defeq \{u \mid u_{\min} \leq u \leq u_{\max}\}$.

We aim to minimize the infinite-horizon cost
\begin{equation}\label{eq:cost}
J(x_0, u(\cdot)) = \int_0^{\infty} \bigl( q(x) + u^\top R\, u \bigr)\, \dd t
\quad \text{s.t. } \quad \eqref{eq:dynamics},
\end{equation}
where $q:\R^n \to \R_{\geq 0}$ is positive semidefinite with $q(0)=0$ (here we consider $q(x) \defeq x^\top Q x$ with $Q \succeq 0$), and $R \succ 0$, while satisfying input and state constraints.

Our approach decouples this problem into two layers. First, the \emph{unconstrained} optimal control problem is approximately solved offline via the HJB equation. Then, constraints are enforced online via a CBF-QP. We begin by reviewing the unconstrained problem.

\subsection{Hamilton--Jacobi--Bellman (HJB) Equation}\label{sec:hjb}
Ignoring constraints for the moment, the optimal value function $V^*(x_0) = \min_{u(\cdot)} J(x_0, u(\cdot))$ satisfies the HJB equation $\mathcal{H}(V^*) = 0$, where the Hamiltonian is
\begin{equation}\label{eq:hamiltonian}
\mathcal{H}(V) \defeq \nabla V^{\top}\! f(x) + q(x) - \tfrac{1}{4} \nabla V^{\top}\! g(x) R^{-1} g(x)^\top \nabla V,
\end{equation}
and the optimal feedback control is
\begin{equation}\label{eq:optimal_control}
u^*(x) = -\tfrac{1}{2} R^{-1} g(x)^\top \nabla V^*(x).
\end{equation}
Solving $\mathcal{H}(V^*) = 0$ exactly is generally intractable since \eqref{eq:hamiltonian} is a nonlinear PDE due to the quadratic dependence on $\nabla V$. For a given admissible policy $u(x)$, this nonlinearity is removed: the associated value function $V$ satisfies the Generalized HJB (GHJB) equation, a linear PDE in $\nabla V$:
\begin{equation}\label{eq:ghjb}
\nabla V^\top \big(f(x) + g(x)u\big) + q(x) + u^\top R u = 0.
\end{equation}

\subsection{Relaxed HJB Formulation}\label{sec:relaxation}
Following~\cite{jiang2015global}, we relax the HJB equality to an inequality. Let $\Omega \subset \R^n$ be a compact set containing the origin, and $\mathcal{P}$ denote the set of positive definite, proper, $\mathcal{C}^1$ functions. The relaxed problem is:
\begin{equation}\label{eq:relaxed}
\min_{V} \int_\Omega V(x)\, \dd x \quad \text{s.t.} \quad \mathcal{H}(V) \leq 0, \quad V \in \mathcal{P}.
\end{equation}
\begin{assumption}\label{ass:admissible}
There exist $V_0 \in \mathcal{P}$ and a feedback policy $u_0$ such that $\mathcal{L}(V_0, u_0) \geq 0$ for all $x \in \R^n$, where $\mathcal{L}(V, u) \defeq -\nabla V^\top (f + g\,u) - q(x) - u^\top R u$ is the Bellman operator.
\end{assumption}
\begin{assumption}\label{ass:optimal_exists}
There exists $V^* \in \mathcal{P}$ satisfying the HJB equation $\mathcal{H}(V^*) = 0$.
\end{assumption}
\begin{theorem}[Relaxed Solution Properties~\cite{jiang2015global}]\label{thm:relaxed}
Under the above assumptions:
(i)~Problem~\eqref{eq:relaxed} has a nonempty feasible set.
(ii)~For any feasible $V$, the control $\bar{u}(x) = -\frac{1}{2} R^{-1} g^\top \nabla V$ is globally asymptotically stabilizing.
(iii)~$V$ upper-bounds the cost: $V(x_0) \geq J(x_0, \bar{u})$ for all $x_0$.
(iv)~$V^*(x) \leq V(x)$ for all feasible $V$.
(v)~$V^*$ is the global optimum of~\eqref{eq:relaxed}.
\end{theorem}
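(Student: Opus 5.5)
The plan is to prove the five items in order, using one completion-of-squares identity and a comparison argument along closed-loop trajectories. Write $\bar u = -\tfrac12 R^{-1} g^\top \nabla V$. For any $C^1$ function $V$ and any control value $u$ we have
\begin{equation*}
\nabla V^\top(f+gu) + q + u^\top R u = \mathcal{H}(V) + (u-\bar u)^\top R (u-\bar u),
\end{equation*}
so that $\mathcal{L}(V,u) = -\mathcal{H}(V) - (u-\bar u)^\top R(u-\bar u)$.

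For (i), take $V_0,u_0$ from Assumption~\ref{ass:admissible}. The identity gives $\mathcal{H}(V_0) = -\mathcal{L}(V_0,u_0) - (u_0-\bar u_0)^\top R(u_0-\bar u_0) \le 0$. Since $V_0\in\mathcal P$, it is feasible.

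For (ii) and (iii), fix a feasible $V$ and close the loop with $u=\bar u$. Along trajectories,
\begin{equation*}
\dot V = \nabla V^\top(f+g\bar u) = \mathcal{H}(V) - q - \bar u^\top R\bar u \le -q - \bar u^\top R\bar u \le 0 .
\end{equation*}
Because $V$ is positive definite and proper, sublevel sets are compact, so trajectories are bounded, exist for all time, and the origin is Lyapunov stable. Attractivity is the main obstacle, since $q$ is only semidefinite and $\dot V$ is only negative semidefinite. I would apply LaSalle's invariance principle. On the set where $\dot V=0$ we have $\bar u = 0$ and $q(x)=0$, and the dynamics reduce to $\dot x = f(x)$. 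One then needs the largest invariant set there to be $\{0\}$, which is a zero-state detectability condition on $(f,q)$. I expect that this requirement is implicit in the cited setting of \cite{jiang2015global}. If $Q\succ0$, it is immediate, and $\dot V\le -x^\top Q x$ gives GAS directly. I would state this reliance explicitly.

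For (iii), integrate the inequality from $0$ to $T$ to get $V(x_0)-V(x(T)) \ge \int_0^T (q+\bar u^\top R\bar u)\,\dd t$. Letting $T\to\infty$ and using $V(x(T))\to V(0)=0$ by GAS gives $V(x_0)\ge J(x_0,\bar u)$.

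For (iv), let $u^*$ be given by \eqref{eq:optimal_control} from $V^*$ (Assumption~\ref{ass:optimal_exists}). Since $\mathcal{H}(V^*)=0$, $V^*$ is feasible, so by (ii)--(iii) $u^*$ is stabilizing and $V^*(x_0)\ge J(x_0,u^*)$. Next apply the identity to $V^*$ along any stabilizing trajectory with finite cost: $\tfrac{\dd}{\dd t}V^* = -q-u^\top Ru + (u-u^*)^\top R(u-u^*)$. Integrating gives $J(x_0,u)\ge V^*(x_0)$, with equality for $u^*$. Hence $V^*(x_0)=\min_u J(x_0,u)$ over stabilizing policies. Then for any feasible $V$, (iii) yields $V(x_0)\ge J(x_0,\bar u)\ge V^*(x_0)$.

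For (v), $V^*$ is feasible, and by (iv) $\int_\Omega V\,\dd x\ge\int_\Omega V^*\,\dd x$ for every feasible $V$. So $V^*$ attains the minimum.
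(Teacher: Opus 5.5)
Your argument is correct and is essentially the one the paper relies on: the paper defers to \cite{jiang2015global}, and only sketches (ii) itself via the Lyapunov bound $\dot V = \mathcal{H}(V) - q - \bar u^\top R\,\bar u \le -q$; as far as I recall, the cited proof uses the same completion of squares for (i), integration along the closed loop for (iii), comparison with the optimality of $V^*$ for (iv), and (iv)$\Rightarrow$(v). Your attractivity caveat points to a genuine gap in the statement, not in your proof, because \cite{jiang2015global} takes $q$ positive definite while the paper only assumes $Q \succeq 0$: for $f=0$, $g=(1,0)^\top$, $Q=\mathrm{diag}(1,0)$, $R=1$, both assumptions hold with $V_0=V^*=x_1^2+x_2^2$ and $u_0=-x_1$, and every $V=x_1^2+c\,x_2^2$ with $c>0$ solves $\mathcal{H}(V)=0$, yet $\bar u=-x_1$ leaves $x_2$ constant, so (ii) and (iv) fail without your extra hypothesis, whereas (i) and (iii) still hold (for (iii) you do not even need GAS, since $V(x(T))\ge 0$ already gives $V(x_0)\ge\int_0^T(q+\bar u^\top R\,\bar u)\,\dd t$ for every $T$).
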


Property~(ii) is the key insight: \emph{any} feasible $V$ simultaneously provides a stabilizing controller and a Lyapunov certificate, since $\dot{V} = \mathcal{H}(V) - q(x) - \bar{u}^\top R\, \bar{u} \leq -q(x)$. Minimizing $\int V\, \dd x$ tightens this bound, driving $V$ toward $V^*$.\looseness=-1

\subsection{Value Function Approximation}\label{sec:value_approx}

Our proposed online control framework requires a precomputed approximate value function $\hat V$. The quality of $\hat V$ affects performance but not safety (\cref{thm:decoupling}).

For our numerical examples, we used two different approximation methods, which we now briefly review. In all cases, we parameterized $V$ as a polynomial $\hat{V}(x) = \sum_i c_i\, m_i(x)$ in even-degree monomials and solved for the coefficients via \emph{policy iteration}~\cite{jiang2015global}: alternating between evaluating the current policy (solving for $V^{(k)}$ with $\mathcal{L}(V^{(k)}, u^{(k)}) \geq 0$) and improving via $u^{(k+1)} = -\frac{1}{2} R^{-1} g^\top \nabla V^{(k)}$. Under Assumptions~\ref{ass:admissible} and \ref{ass:optimal_exists}, the iterates $\{V^{(k)}\}$ decrease monotonically toward $V^*$, and each policy is globally stabilizing~\cite{jiang2015global}. The policy evaluation step requires verifying $\mathcal{L}(V^{(k)}, u^{(k)}) \geq 0$. For our linear example, we used Successive Galerkin Approximation (SGA)~\cite{beard1997galerkin}, which reduces the computation to solving a system of linear equations with analytically computable entries, converging to the exact LQR solution as basis richness increases. For our second example, a nonlinear polynomial system, we used Sum-of-Squares (SOS) programming, which replaced the (NP-hard) non-negativity check with a semidefinite program, inheriting the same convergence guarantees~\cite{jiang2015global}.

\subsection{Safety via Control Barrier Functions (CBFs)}\label{sec:cbf_background}

We now review the (high-order) CBF tools used in the online layer.
\begin{definition}[Relative Degree~\cite{xiao2021high}]\label{def:reldeg}
A (sufficiently) differentiable function $h:\R^n \to \R$ has relative degree $r$ with respect to \eqref{eq:dynamics} if $L_g L_f^k h(x) = 0$ for all $k < r-1$ and $L_g L_f^{r-1} h(x) \neq 0$.
\end{definition}
\begin{definition}[HOCBF~\cite{ames2019control,xiao2021high}]\label{def:hocbf}
Given $B(x) \geq 0$ with relative degree $r$ and $\alpha_1, \ldots, \alpha_r > 0$, define
\begin{equation}\label{eq:hocbf_seq}
\psi_0(x) \defeq B(x), \quad \psi_k(x) \defeq \dot{\psi}_{k-1}(x) + \alpha_k\, \psi_{k-1}(x),
\end{equation}
for $k = 1,\ldots,r$, with $\mathcal{C}_k = \{x \mid \psi_k(x) \geq 0\}$. Then $B$ is a HOCBF if $\sup_{u \in \mathcal{U}} \psi_r(x, u) \geq 0$ for all $x \in \mathcal{C}_0 \cap \cdots \cap \mathcal{C}_{r-1}$.
\end{definition}
\begin{theorem}[HOCBF Forward Invariance~\cite{ames2019control,xiao2021high}]\label{thm:hocbf_safety}
If $B$ is a HOCBF and $u(x)$ satisfies $\psi_r(x,u) \geq 0$ on $\mathcal{C}_0 \cap \cdots \cap \mathcal{C}_{r-1}$, then this set is forward invariant.
\end{theorem}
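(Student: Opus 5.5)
The plan is to argue by induction on the order $k$, starting from the top constraint $\psi_r \geq 0$ and working down to $\psi_0 = B \geq 0$. At each stage we use the scalar comparison lemma: if a function $y(t)$ satisfies $\dot y + \alpha y \geq 0$ with $y(0)\ge 0$, then $y(t) \geq 0$ for all later times.

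First fix an initial state $x_0 \in \mathcal{C}_0 \cap \cdots \cap \mathcal{C}_{r-1}$. Consider the closed-loop system $\dot x = f(x) + g(x)u(x)$. We assume $u(x)$ is locally Lipschitz, as is standard for CBF-QP controllers, or at least regular enough that a unique absolutely continuous solution $x(t)$ exists on a maximal interval $[0,T)$. Along this solution define $y_k(t) \defeq \psi_k(x(t))$ for $k=0,\dots,r$.

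By the recursive definition \eqref{eq:hocbf_seq}, $y_k = \dot y_{k-1} + \alpha_k y_{k-1}$. The relative-degree assumption (\cref{def:reldeg}) makes $\psi_0,\dots,\psi_{r-1}$ functions of $x$ alone, with $u$ entering only at $\psi_r$. The hypothesis that $u$ satisfies $\psi_r(x,u(x)) \geq 0$ then gives $y_r(t) \geq 0$.

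The key step is a backward induction. Suppose $y_k(t) \geq 0$ on $[0,T)$. Then $\dot y_{k-1} + \alpha_{k-1}' y_{k-1} \ge 0$, writing $\alpha_k$ for the relevant constant. Multiplying by $e^{\alpha_k t}$ gives $\frac{\dd}{\dd t}\bigl(e^{\alpha_k t} y_{k-1}(t)\bigr) = e^{\alpha_k t} y_k(t) \geq 0$. Integrating,
\[
y_{k-1}(t) \geq e^{-\alpha_k t} y_{k-1}(0) \geq 0,
\]
since $x_0 \in \mathcal{C}_{k-1}$. Descending from $k=r$ to $k=1$ yields $y_{k}(t)\geq 0$ for all $k=0,\dots,r-1$, i.e.\ $x(t)$ stays in $\mathcal{C}_0\cap\cdots\cap\mathcal{C}_{r-1}$ on $[0,T)$.

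One subtlety needs care. The hypothesis only guarantees $\psi_r(x,u)\ge 0$ \emph{on} the intersection set, so the argument above is slightly circular: it uses $y_r\ge 0$ before knowing the trajectory stays in the set. I would break the circularity by a first-exit-time argument. Let $\tau = \sup\{t : x(s) \in \mathcal{C}_0\cap\cdots\cap\mathcal{C}_{r-1}\ \forall s\in[0,t]\}$. On $[0,\tau]$ the induction applies, and by continuity of the $\psi_k$ the trajectory lies in the closed set at $\tau$.

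To push past $\tau$ and conclude $\tau = T$, I would use a Nagumo-type tangency argument. At $x(\tau)$ we have $\psi_r \ge 0$, and the lower bounds $y_{k-1}(t)\ge e^{-\alpha_k(t-\tau)}y_{k-1}(\tau)$ extend to a neighborhood of $\tau$ by continuity of $y_r$. A cleaner alternative is to invoke Nagumo's theorem directly on the closed set with the closed-loop vector field. I expect this boundary/exit-time step to be the main obstacle, because it is where regularity of $u$ and closedness of the sets matter. For the paper's purposes it can be deferred to the cited references \cite{ames2019control,xiao2021high}.

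Finally, forward invariance holds for all $t\in[0,T)$. If $T<\infty$ it is a finite escape time, which does not contradict invariance; under compactness of the safe set, $T=\infty$.
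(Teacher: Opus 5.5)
Your backward chain of comparison-lemma steps is the standard argument behind this result. The paper does not prove the theorem; it imports it from the cited references. In \cite{xiao2021high}, however, the controller is assumed to satisfy $\psi_r(x(t),u(t))\ge 0$ at \emph{every} time along the closed-loop trajectory, which is what the QP enforces whenever it is feasible. Under that reading your induction is already a complete proof, and no exit-time argument is needed.

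Under the literal reading you adopt, $\psi_r\ge 0$ is known only on $S=\mathcal{C}_0\cap\cdots\cap\mathcal{C}_{r-1}$. You correctly spot the circularity, but your patch leaves a genuine gap.

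\begin{enumerate}[1.]
\item The induction on $[0,\tau]$ adds nothing, since $x(t)\in S$ there by definition of $\tau$. All of the content sits in the step you defer.
\item The claim that the bounds extend past $\tau$ ``by continuity of $y_r$'' fails exactly in the critical case $y_r(\tau)=0$. Continuity only propagates strict positivity, and nothing controls the sign of $y_r$ once $x(t)\notin S$.
\item Invoking Nagumo's theorem is the right move. But it requires the sub-tangency condition $F(x)\in T_S(x)$ on $\partial S$ for $F=f+g\,u(x)$, and that needs a regularity condition you never use.
\end{enumerate}

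Some such condition is indispensable. Take $B(x)=-(x-1)^2$ and $\dot x=x+u$, where $L_gB$ vanishes on $\partial S$. The policy $u\equiv 0$ satisfies $\psi_1\ge 0$ on $S=\{1\}$, yet the state leaves $S$.

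The missing step is as follows.
\begin{enumerate}[1.]
\item At $x\in\partial S$ and for each active index $k$, you have $\nabla\psi_k^\top F=\psi_{k+1}(x)\ge 0$ if $k<r-1$, and $\nabla\psi_{r-1}^\top F=\psi_r(x,u(x))\ge 0$.
\item Each $\psi_k$ ($k\le r-1$) is $L_f^kB$ plus a constant-coefficient combination of the $L_f^jB$ with $j<k$.
\item The differentials $\dd B,\dd L_fB,\dots,\dd L_f^{r-1}B$ are linearly independent wherever $L_gL_f^{r-1}B\neq 0$; this is the classical lemma from input--output linearization.
\item Hence, if the relative degree is $r$ at every point of $\partial S$, LICQ holds there. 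Then $T_S(x)$ equals the linearized cone $\{v:\nabla\psi_k(x)^\top v\ge 0,\ k \text{ active}\}$.
\item With $u$ locally Lipschitz, Nagumo's theorem then gives forward invariance.
\end{enumerate}
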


For $r = 1$, the HOCBF condition reduces to the standard CBF constraint~\cite{ames2019control}:
$L_f B + L_g B\, u \geq -\alpha_1\, B$,
and for $r = 2$, expanding \eqref{eq:hocbf_seq} yields
\begin{equation*}
L_f^2 B + L_g L_f B \cdot u + (\alpha_1 + \alpha_2)\, L_f B + \alpha_1 \alpha_2\, B \geq 0,
\end{equation*}
which is affine in~$u$ and directly incorporable into a QP.

%%%%%%%%%%%%%%%%%%%%%%%%%%%%%%%%%%%%%%%%%%%%%%%%%%%%%%%%%%%%%%%%%%%%%%%%%%%%%%%%
\section{GHJB-CBF-QP Controller}\label{sec:ghjb_cbf}

We now present our main contribution, GHJB-CBF-QP, which combines the offline value function from \Cref{sec:value_approx} with the CBF machinery from \Cref{sec:cbf_background} into a single QP.\looseness=-1

\subsection{Formulation}

Given the precomputed $\hat{V}(x)$, the online QP objective is derived from the GHJB equation. Recall from \eqref{eq:ghjb} that the Hamiltonian evaluated at a control $u$ (before minimization) is $\nabla V(x)^\top (f(x) + g(x)\,u) + q(x) + u^\top R\, u$. Since the drift terms $\nabla V^\top f$ and $q$ do not depend on $u$, minimizing over $u$ is equivalent to minimizing
\begin{equation}\label{eq:ghjb_obj}
J_{\text{GHJB}}(u) = \nabla \hat{V}(x)^\top g(x)\, u + u^\top R\, u,
\end{equation}
which is convex quadratic in $u$. The unconstrained minimizer is $u^* = -\frac{1}{2} R^{-1} g^\top \nabla \hat{V}$, recovering the offline policy \eqref{eq:optimal_control} with $\hat{V}$ in place of $V^*$. The GHJB-CBF-QP augments this objective with CBFs/HOCBFs and input constraints:
\begin{equation}\label{eq:ghjb_cbf_qp}
\begin{aligned}
\min_{u} \quad & \nabla \hat{V}(x)^\top g(x)\, u + u^\top R\, u \\
\text{s.t.} \quad & \psi_r(x, u) \geq 0, & \text{(CBFs/HOCBFs)} \\
& u_{\min} \leq u \leq u_{\max}. & \text{(Input bounds)}
\end{aligned}
\end{equation}

By Theorem~\ref{thm:hocbf_safety}, any feasible solution guarantees forward invariance of all safe sets, regardless of~$\hat{V}$.

\begin{remark}[No comparison function]\label{rem:no_comparison}
Unlike CLF-CBF-QPs, which require a decay constraint $\dot{V} \leq -\gamma(V)$ with a carefully chosen comparison function $\gamma$~\cite{fan2025concave} and a slack variable to ensure feasibility under actuator bounds, the GHJB-CBF-QP~\eqref{eq:ghjb_cbf_qp} imposes no decay rate constraint and requires no slack variable. The effective decay is determined implicitly by $\nabla \hat{V}$, eliminating both design choices.
\end{remark}

\subsection{Theoretical Properties}

\begin{proposition}[Optimality Recovery]\label{prop:recovery}
In~\eqref{eq:ghjb_cbf_qp}, if all constraints are inactive, the solution recovers the unconstrained offline policy: $u_{\mathrm{GHJB\text{-}CBF}} = -\frac{1}{2} R^{-1} g^\top \nabla \hat{V}$.
\end{proposition}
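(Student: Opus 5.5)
The plan is to treat \eqref{eq:ghjb_cbf_qp} at a fixed state $x$ as a finite-dimensional convex program in $u$. Then use the fact that, for a strictly convex objective, the constrained minimizer equals the unconstrained one whenever the constraints are not active at the optimum.

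\textbf{Step 1: uniqueness of the minimizer.} Since $R \succ 0$, the objective $J_{\text{GHJB}}(u) = \nabla \hat V(x)^\top g(x) u + u^\top R u$ in \eqref{eq:ghjb_obj} is strictly convex, with Hessian $2R$. For $r=1$ and $r=2$ the constraint $\psi_r(x,u)\ge 0$ is affine in $u$, as shown after \Cref{thm:hocbf_safety}; more generally $\psi_r$ is affine in $u$ because $u$ enters only through $L_g L_f^{r-1}B\,u$. The input bounds are also affine. So the feasible set is a polyhedron, and if it is nonempty the QP has a unique minimizer $u^\dagger$.

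\textbf{Step 2: the unconstrained minimizer.} Setting $\nabla_u J_{\text{GHJB}} = g^\top \nabla \hat V + 2Ru = 0$ gives $\bar u = -\tfrac12 R^{-1} g^\top \nabla \hat V$.

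\textbf{Step 3: reduction to the unconstrained problem.} The KKT conditions are necessary and sufficient here, since the constraints are affine. I would make ``all constraints inactive'' precise as: every inequality holds strictly at the solution $u^\dagger$. Complementary slackness then forces every multiplier to vanish, so stationarity reduces to $g^\top\nabla\hat V + 2Ru^\dagger = 0$, i.e.\ $u^\dagger = \bar u$.

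Alternatively, one can read the hypothesis as ``$\bar u$ is feasible''. In that case $\bar u$ minimizes $J_{\text{GHJB}}$ over all of $\R^m$, hence also over the feasible subset, and by uniqueness $u^\dagger = \bar u$. I would state both readings, since they agree.

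The only real obstacle is this interpretive one: pinning down what ``inactive'' means. The claim is straightforward once strict convexity, which follows from $R\succ0$, guarantees uniqueness.
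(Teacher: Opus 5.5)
Your proposal is correct and follows the paper's route: drop the inactive constraints so the QP reduces to unconstrained minimization of $J_{\text{GHJB}}$, then solve $g^\top \nabla \hat V + 2Ru = 0$. Your strict-convexity and KKT argument makes rigorous the reduction that the paper's proof simply assumes. Note that the ``$\bar u$ feasible'' reading is slightly weaker than requiring every constraint to hold strictly at the solution, but both give the same conclusion.
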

\begin{proof}
With no active constraints, \eqref{eq:ghjb_cbf_qp} reduces to unconstrained minimization of \eqref{eq:ghjb_obj}. Setting $\partial J_{\mathrm{GHJB}}/\partial u = g^\top \nabla \hat{V} + 2Ru = 0$  yields $u^* = -\frac{1}{2} R^{-1} g^\top \nabla \hat{V}$.
\end{proof}

\begin{proposition}[Safety-Performance Decoupling]\label{thm:decoupling}
For any value function approximation $\hat{V}$, the online policy \eqref{eq:ghjb_cbf_qp}:
(i)~is guaranteed to be safe and
(ii)~recovers optimal performance as $\hat{V} \to V^*$ when the constraints are inactive.
\end{proposition}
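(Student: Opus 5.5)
The plan is to prove the two parts separately. Part (i) follows from Theorem~\ref{thm:hocbf_safety} applied to the constraint set of \eqref{eq:ghjb_cbf_qp}. Part (ii) follows from Proposition~\ref{prop:recovery} plus a continuity argument in $\nabla\hat V$.

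For (i), the key observation is that $\hat V$ appears only in the linear term $\nabla \hat{V}(x)^\top g(x)\,u$ of the objective. It never appears in the constraints $\psi_r(x,u)\ge 0$ or $u_{\min}\le u\le u_{\max}$. Hence the feasible set $\mathcal{K}(x)\defeq\{u\in\mathcal{U}\mid \psi_r(x,u)\ge 0\}$ is independent of $\hat V$. Since each $B$ is an HOCBF (Definition~\ref{def:hocbf}), $\sup_{u\in\mathcal{U}}\psi_r(x,u)\ge 0$ on $\mathcal{C}_0\cap\cdots\cap\mathcal{C}_{r-1}$. Because $\mathcal{U}$ is a compact box and $\psi_r$ is affine in $u$, the supremum is attained, so $\mathcal{K}(x)$ is nonempty. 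The objective is strictly convex because $R\succ 0$, so the QP has a unique minimizer $u(x)\in\mathcal{K}(x)$. That minimizer satisfies $\psi_r(x,u(x))\ge 0$ by construction, and Theorem~\ref{thm:hocbf_safety} then gives forward invariance of the safe set for every choice of $\hat V$.

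The main obstacle in (i) is that Theorem~\ref{thm:hocbf_safety} implicitly needs the closed loop to have well-defined (e.g.\ locally Lipschitz) solutions. The plan is to cite standard Lipschitz continuity of strictly convex QP solutions in their data, which holds under a constraint qualification such as linear independence of the active constraint gradients. The data $g$, $\nabla\hat V$ and the $\psi_r$ coefficients are locally Lipschitz when $\hat V$ is $\mathcal C^2$ and $f,g,B$ are smooth enough. I will flag this as a standing regularity assumption rather than prove it. If several CBFs are imposed simultaneously, joint feasibility is also needed, and I will state it as an assumption since Definition~\ref{def:hocbf} is per-constraint.

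For (ii), at any $x$ where no constraint is active, Proposition~\ref{prop:recovery} gives $u_{\mathrm{GHJB\text{-}CBF}}(x)=-\tfrac12 R^{-1}g(x)^\top\nabla\hat V(x)$. If $\hat V\to V^*$ in the $\mathcal C^1$ sense on $\Omega$, so that $\nabla\hat V\to\nabla V^*$ uniformly, then this control converges uniformly to $u^*(x)$ in \eqref{eq:optimal_control}. Uniform convergence of the feedback, together with continuous dependence of trajectories on the vector field over finite horizons, gives convergence of the trajectories. Theorem~\ref{thm:relaxed}(iii)--(iv) then sandwiches the cost as $V^*(x_0)\le J(x_0,\bar u)\le \hat V(x_0)\to V^*(x_0)$ for feasible $\hat V$ from policy iteration, which gives optimal performance in the limit.
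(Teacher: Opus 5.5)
Part (i) is the paper's argument: $\hat V$ enters only the objective, the CBF constraints are hard, and Theorem~\ref{thm:hocbf_safety} applies to any feasible $u$. The paper simply presupposes feasibility. Your nonemptiness argument from the $\sup_{u\in\mathcal{U}}$ in Definition~\ref{def:hocbf}, and your flags on joint feasibility and Lipschitz regularity of the QP solution map, are extra care rather than a different route.

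Part (ii) is where you genuinely go further. The paper stops at the control law: it uses $\nabla\hat V\to\nabla V^*$ (tacitly the $\mathcal{C}^1$ convergence you make explicit) and Proposition~\ref{prop:recovery} to get $u\to u^*$ where constraints are inactive. You also get convergence of the \emph{cost}, through the sandwich $V^*(x_0)\le J(x_0,\bar u)\le\hat V(x_0)$ from Theorem~\ref{thm:relaxed}(iii)--(iv), and this needs only $\hat V(x_0)\to V^*(x_0)$.

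For that to apply to the GHJB-CBF-QP, two provisos are needed. First, the constraints must be inactive along the whole closed-loop trajectory, so that it coincides with the $\bar u$ trajectory; inactivity at individual points is not enough. Second, $\hat V$ must satisfy $\mathcal{H}(\hat V)\le 0$. Policy-iteration iterates do, since $\mathcal{H}(V)=\min_u[-\mathcal{L}(V,u)]\le-\mathcal{L}(V^{(k)},u^{(k)})\le 0$, but an arbitrary $\hat V$ need not. With both provisos, the sandwich alone closes the argument. Your intermediate step on uniform feedback convergence and finite-horizon continuous dependence is then superfluous, and it could not control an infinite-horizon cost on its own anyway.
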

\begin{proof}
(i)~The CBF and HOCBF constraints in~\eqref{eq:ghjb_cbf_qp} are enforced as hard constraints regardless of the objective. By Theorem~\ref{thm:hocbf_safety}, any feasible $u$ renders the safe sets forward invariant, independent of~$\hat{V}$.
(ii)~As $\hat{V} \to V^*$, we have $\nabla \hat{V} \to \nabla V^*$, so the QP objective approaches the true Hamiltonian minimization. By Proposition~\ref{prop:recovery}, whenever constraints are inactive the controller recovers $u^* = -\frac{1}{2}R^{-1}g^\top \nabla V^*$.
\end{proof}

\begin{remark}[Optimality Gap Decomposition]\label{thm:gap}
Let $J^{\mathrm{c}}(x_0)$ denote the cost achieved by the constrained optimal controller (solving~\eqref{eq:cost} subject to all state and input constraints), and let $J_{\mathrm{GHJB\text{-}CBF}}(x_0)$ denote the cost achieved by the GHJB-CBF-QP controller~\eqref{eq:ghjb_cbf_qp}. The total suboptimality $J_{\mathrm{GHJB\text{-}CBF}}(x_0) - J^{\mathrm{c}}(x_0)$ can be attributed to three sources:
(i)~$\Delta_{\mathrm{approx}}$: error from approximating $V^*$ with $\hat{V}$;
(ii)~$\Delta_{\mathrm{proj}}$: error from projecting the unconstrained policy onto the feasible set, which is zero when constraints are inactive; and
(iii)~$\Delta_{\mathrm{CBF}}$: conservatism of the CBF formulation relative to hard state constraints, which depends on the constraint structure and the parameter $\alpha$ (CBF decay rate).
These sources are conceptually distinct and admit separate mitigation strategies: $\Delta_{\mathrm{approx}}$ by improving the value function approximation, $\Delta_{\mathrm{proj}}$ by loosening constraints, and $\Delta_{\mathrm{CBF}}$ by increasing $\alpha$ or exploiting favorable constraint structure (see Proposition~\ref{prop:integrator}).\looseness=-1
\end{remark}

To our knowledge, the following observation has not been formally stated in prior work, though it follows directly from the CBF construction.
\begin{proposition}[Integrator-State Constraints]\label{prop:integrator}
Consider a box constraint $x_{i,\min} \leq x_i \leq x_{i,\max}$ on a state component whose dynamics satisfy $\dot{x}_i = f_i(x) + g_i(x)^\top u$ with $f_i(x) = 0$ for all $x$ and $t$. Define $\bar{B}(x) = x_{i,\max} - x_i$ and $\unbar{B}(x) = x_i - x_{i,\min}$. Then each CBF has relative degree one with $L_f B = 0$, and the CBF constraints recover the hard state constraints exactly as $\alpha \to \infty$, so $\Delta_{\mathrm{CBF}} = 0$.
\end{proposition}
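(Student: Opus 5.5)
The plan is to compute the Lie derivatives of the two barriers directly and then look at the CBF inequality as $\alpha\to\infty$.

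\emph{Step 1: relative degree one with $L_f B = 0$.} For $\bar{B}(x) = x_{i,\max} - x_i$ we have $\nabla \bar{B} = -e_i$. Hence
\[
L_f \bar{B} = -f_i(x) = 0, \qquad L_g \bar{B} = -g_i(x)^\top .
\]
Symmetrically, $L_f \unbar{B} = 0$ and $L_g \unbar{B} = g_i(x)^\top$. Relative degree one in the sense of Definition~\ref{def:reldeg} requires $g_i(x)\neq 0$. I will state this explicitly as the (implicit) assumption that $x_i$ is directly actuated. Without it, $L_g L_f^k B = 0$ for all $k$, since $L_f B \equiv 0$, and the barrier would have no well-defined relative degree.

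\emph{Step 2: the constraints as $\alpha$ grows.} With $r=1$, the conditions in \eqref{eq:ghjb_cbf_qp} become
\[
-g_i^\top u \ge -\alpha\,(x_{i,\max}-x_i), \qquad g_i^\top u \ge -\alpha\,(x_i - x_{i,\min}),
\]
that is,
\[
-\alpha (x_i - x_{i,\min}) \le \dot{x}_i \le \alpha (x_{i,\max}-x_i).
\]
Two regimes follow.
\begin{itemize}
\item In the interior, where $x_{i,\min}<x_i<x_{i,\max}$, both bounds diverge to $\mp\infty$ as $\alpha\to\infty$. The CBF constraint therefore drops out pointwise, and the admissible input set converges to $\mathcal{U}$, i.e., exactly the set permitted by the hard constraint.
\item On the boundary, say $x_i = x_{i,\max}$, the condition reads $\dot{x}_i \le 0$ for every $\alpha$. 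This is precisely the tangency (Nagumo) condition that any trajectory satisfying the hard constraint must obey.
\end{itemize}
Because $L_f B = 0$, no drift term competes with the $\alpha B$ term. The only restriction the CBF imposes beyond the hard constraint is therefore the $\alpha$-dependent rate bound, which vanishes in the limit.

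\emph{Step 3: $\Delta_{\mathrm{CBF}} = 0$.} Let $\mathcal{K}_\alpha(x)$ denote the set of inputs allowed by the CBF. By Step~2, for every $x$ in the safe set $\mathcal{K}_\alpha(x)$ increases monotonically in $\alpha$ toward the set of inputs compatible with the hard state constraint (tangent-cone condition) intersected with $\mathcal{U}$. Any control law feasible for the hard-constrained problem is then approximated by CBF-feasible laws, so the infimal cost over CBF-admissible policies converges to $J^{\mathrm{c}}$, which is the claim $\Delta_{\mathrm{CBF}} = 0$.

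\emph{Main obstacle.} Step~3 is the hardest part, because it passes from pointwise convergence of constraint sets to convergence of closed-loop costs. My first attempt would be a direct construction. Take an (approximately) optimal constrained trajectory. On any time interval where it stays at distance $\delta$ from the boundary, it satisfies the CBF inequality once $\alpha \ge \|u\|_\infty\|g_i\| /\delta$. Along boundary arcs it has $\dot{x}_i = 0$, which is CBF-admissible for every $\alpha$. What remains are the short approach phases, where I would slightly perturb the trajectory and use continuity of the cost; this is where the bounded-input assumption from $\mathcal{U}$ is needed. Since $\Delta_{\mathrm{CBF}}$ is defined only conceptually in Remark~\ref{thm:gap}, I would frame the result as this limiting statement rather than attempt a fully rigorous value-function convergence proof.
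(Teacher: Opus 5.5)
Your Steps~1--2 are exactly the paper's proof: the same Lie derivatives $L_f B=0$ and $L_g\bar B=-g_i^\top$, the observation that the interior constraints become vacuous as $\alpha\to\infty$, and the reduction to $g_i^\top u\le 0$ on the boundary. Your added hypothesis $g_i(x)\neq 0$ is a real requirement for relative degree one that the paper leaves unstated.

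The paper stops at this set-level equivalence and reads it directly as $\Delta_{\mathrm{CBF}}=0$, so your Step~3 goes beyond what the paper proves. If you pursue it, note that the CBF condition $\dot x_i\le\alpha(x_{i,\max}-x_i)$ forces $x_{i,\max}-x_i(t)\ge(x_{i,\max}-x_i(t_0))e^{-\alpha(t-t_0)}>0$. A CBF-feasible trajectory therefore never reaches the boundary. Its deviation from a boundary-riding reference persists past the short approach phases, and your cost-continuity argument must control that deviation over the remaining infinite horizon.
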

\begin{proof}
Since $f_i(x) = 0$, the Lie derivatives are $L_f \bar{B} = -f_i = 0$ and $L_g \bar{B} = -g_i(x)^\top$, so the CBF conditions reduce to $-g_i^\top u \geq -\alpha(x_{i,\max} - x_i)$ and $g_i^\top u \geq -\alpha(x_i - x_{i,\min})$.

In the interior ($x_{i,\min} < x_i < x_{i,\max}$), the right-hand sides tend to $-\infty$ as $\alpha \to \infty$, so the constraints become vacuous and impose no restriction on $u$. At the boundary $x_i = x_{i,\max}$, the upper constraint becomes $g_i^\top u \leq 0$, which exactly prevents $x_i$ from increasing past the boundary (since $\dot{x}_i = g_i^\top u$). The lower boundary is analogous. Thus, in the limit $\alpha \to \infty$, the CBF constraints are equivalent to the hard state constraints, and no conservatism is introduced.
\end{proof}

In practice, a moderately large $\alpha$ (e.g., $\alpha = 10$) suffices to make the CBF conservatism practically vanish, so that $\Delta_{\mathrm{CBF}} \approx 0$ without requiring $\alpha \to \infty$ (see \Cref{sec:examples}).
%%%%%%%%%%%%%%%%%%%%%%%%%%%%%%%%%%%%%%%%%%%%%%%%%%%%%%%%%%%%%%%%%%%%%%%%%%%%%%%%

\section{Application Examples}\label{sec:examples}

We validate the GHJB-CBF-QP on two examples: Example~1 on a linear system where the exact optimum is known, and Example~2 on a nonlinear system with constraints of relative degree one (Case~1) and two (Case~2). Together, the examples demonstrate (i) near-zero CBF conservatism for integrator-state constraints, (ii) the flexibility to handle different constraint types without modifying the offline value function, and (iii) the inherent suboptimality of pointwise QP-based controllers relative to trajectory-aware optimal control for higher-order constraints. 

All simulations run in Julia on an Apple M2 Pro chip with 32\,GB of RAM at a sampling rate of $10\,\unit{Hz}$. The constrained Optimal Control Problem (OCP) \eqref{eq:cost} is solved via direct transcription in JuMP/Ipopt; all QP-based controllers use JuMP/HiGHS. MPC and constrained Linear Quadratic Regulator (LQR) serve as additional benchmarks for Example~1. The Optimal-Decay (OD) and Rapidly-Exponentially-Stabilizing (RES) CLF-CBF-QP controllers for Example~2 are from~\cite{shahraki2025spacecraft}. All shared parameters are identical across controllers.\looseness=-1

\subsection{Example 1 (Linear): 1D Hovercraft}

The 1D hovercraft has state $x = [p,\, v]^\top \in \R^2$ (position and velocity). The input $u \in \R$ is the thrust force. The system is a double integrator in input-affine form \eqref{eq:dynamics} with
\begin{equation*}
f(x) \defeq \bmat{v \\ 0}, \quad g(x) \defeq \bmat{0 \\ 1}.
\end{equation*}

The SGA uses basis functions $\{p^2, v^2, pv\}$ over $\Omega = [-1,1]^2$ and the initial policy $u_0 = -p - v$. The SGA policy iteration took approximately $0.4$ seconds and converged after $4$ iterations. The velocity is constrained by $v_{\min} \leq v \leq v_{\max}$ and the input by $u_{\min} \leq u \leq u_{\max}$. Since $\dot{v} = u$, velocity is an integrator-state with relative degree one. Defining $\bar{B}(x) = v_{\max} - v$ and $\unbar{B}(x) = v - v_{\min}$, the CBF conditions (Proposition~\ref{prop:integrator}) reduce to $-\alpha(v - v_{\min}) \leq u \leq \alpha(v_{\max} - v)$.

The parameters used in the simulations are $x_0 = [10,\, 0]^\top$, $Q = I_2$, $R = 1$, $\alpha = 10$, $|v| \leq 1\,\unit{m/s}$, and $|u| \leq 1\,\unit{N}$.

\begin{figure}[htp]
	\centering
	\includegraphics{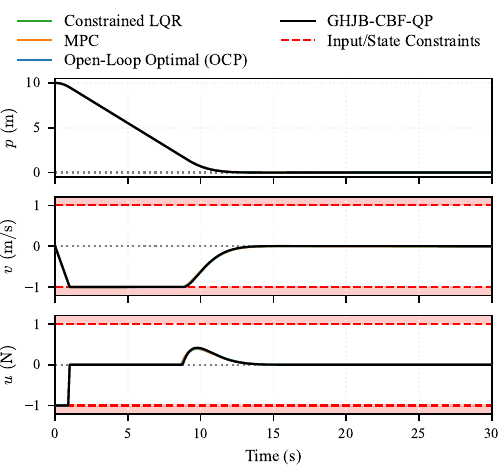}\vspace{-1mm}
	\caption{Simulated trajectories (position, velocity, input) for the 1D hovercraft for various controllers. All trajectories are the same for this example.}\label{fig:hovercraft_Sim}
\end{figure}

\begin{table}[ht!]
\vspace{2mm}
\renewcommand{\arraystretch}{1.2}
\centering
\caption{Cost and wall clock time comparison for different controllers.
For OCP, this is the offline computation time. For the other control methods, this is the total online computation time (all iterations). 
The corresponding state and input trajectories are shown in \cref{fig:hovercraft_Sim}. }\label{table:hovercraft_performance_comparison}
\begin{tabular}{lcc}
\toprule
    Control Method & Cost $J$ & Time (s) \\
    \midrule
    Constrained LQR & $398.3448$ & $0.07$ \\
    MPC & $398.3475$ & $1.07$ \\
    Open-Loop Optimal (OCP) & $398.3440$ & $0.03$ \\
    \textbf{GHJB-CBF-QP} & $\mathbf{398.3448}$ & $\mathbf{0.28}$ \\
\bottomrule
\end{tabular}
\end{table}
\vspace{-2mm}

The simulation results in \cref{fig:hovercraft_Sim} and \cref{table:hovercraft_performance_comparison} confirm that the closed-loop GHJB-CBF-QP achieves near-identical performance (all trajectories are superimposed) to all three benchmarks for linear systems. All reported times reflect the total wall clock time to generate the full trajectory; for QP-based controllers this is the cumulative cost of all per-step solves, confirming real-time implementability. The close agreement with the open-loop OCP is consistent with Proposition~\ref{prop:integrator}, which predicts $\Delta_{\mathrm{CBF}} \approx 0$ for the velocity constraint since it is an integrator-state with zero drift. Moreover, the relative-degree-one structure means the constraints reduce to instantaneous bounds on $u$, for which pointwise projection incurs little cost penalty ($\Delta_{\mathrm{proj}} \approx 0$).

\subsection{Example 2 (Nonlinear): Spacecraft Attitude Control}

We apply the framework to spacecraft attitude control with reaction wheels, previously studied in~\cite{shahraki2025spacecraft} using CLF-CBF-QP. The state is $x = [\sigma^\top, \omega^\top, h_w^\top]^\top \in \R^9$, where $\sigma \in \R^3$ denotes the Rodrigues Parameters (RPs) attitude representation, $\omega \in \R^3$ is the angular velocity in the body-fixed frame, and $h_w \in \R^3$ is the reaction wheel angular momentum. The input $u \in \R^3$ is the control torque. Three identical, axially symmetric reaction wheels are assumed to be aligned with the spacecraft's principal axes. The system takes the input-affine form \eqref{eq:dynamics} with~\cite{shahraki2025spacecraft, schaub1996stereographic}
\begin{equation*}
f(x) \defeq \begin{bmatrix} G(\sigma)\omega \\ -J^{-1}[\omega]_\times (J\omega + h_w) \\ 0_{3\times 1} \end{bmatrix}, \quad
g(x) \defeq \begin{bmatrix} 0_{3\times3} \\ J^{-1} \\ -I_3 \end{bmatrix},
\end{equation*}
where $G(\sigma) = \frac{1}{2}(I_3 + [\sigma]_\times + \sigma \sigma^\top)$, $J \in \R^{3\times3}$ is the spacecraft inertia matrix, and $[\cdot]_\times$ denotes the skew-symmetric cross-product matrix. 

The SOS approximation uses $147$ even-degree monomials (degrees $2$ and $4$) in $(\sigma, \omega) \in \R^6$ with domain $\Omega = [-1,1]^6$ enforced via the $S$-procedure. The initial stabilizing policy is $u_0 = -\sigma - 3\omega$. The SOS policy iteration using MOSEK took approximately $7.6$ seconds and converged after $7$ iterations.

We consider a \emph{rest-to-rest maneuver} to the origin from $x_0 = \addtolength{\arraycolsep}{-1mm}\tr{\bmat{\tr{\sigma_0} & \tr{0} & \tr{0}}}$, where $\sigma_0 = \addtolength{\arraycolsep}{-1mm}\tr{\bmat{0.312&-0.666&0.606}}$ corresponds to initial Euler angles $\addtolength{\arraycolsep}{-1mm}\tr{\bmat{\ang{80}&\ang{-30}&\ang{60}}}$. The remaining parameters are $Q = I_6$, $R = I_3$, inertia matrix $J = \sbmat{1.8140 & -0.1185 & 0.0275 \\ -0.1185 & 1.7350 & 0.0169 \\ 0.0275 & 0.0169 & 3.4320}\,\unit{kg.m^2}$, and $|u_i| \leq 0.123\,\unit{N.m}$.

\paragraph*{Case~1 (CBF): Reaction wheel momentum constraints}
The momentum $h_w$ satisfies $\dot{h}_w = -u$, an integrator-state with zero drift under box constraints $|h_{w,i}| \leq 0.4\,\unit{N.m.s}$. Notably, $h_w$ enters the system exclusively through the CBF layer and is excluded from the SOS approximation. Extending the basis to all nine states would require $540$ monomials, so this separation is crucial for tractability. Defining vector-valued CBFs $\bar{B}(x) = h_{w,\max} - h_w$ and $\unbar{B}(x) = h_w - h_{w,\min}$, the CBF conditions (Proposition~\ref{prop:integrator}) reduce to $-\alpha(h_w - h_{w,\min}) \leq u \leq \alpha(h_{w,\max} - h_w)$~\cite{shahraki2025spacecraft}, with $\alpha = 10$.

\begin{figure}[htp]
	\centering 
	\includegraphics{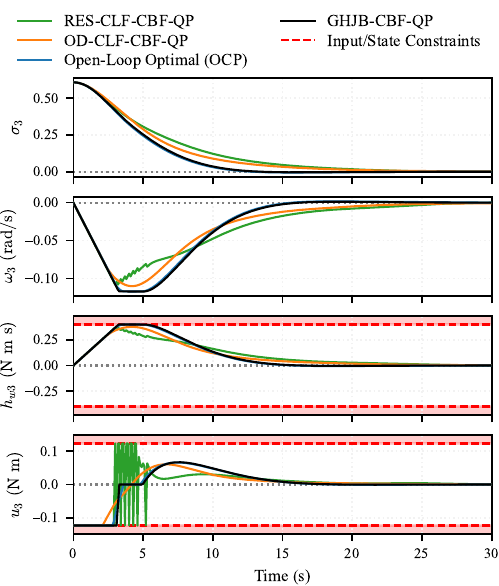}\vspace{-1mm}
	\caption{Representative attitude RPs, angular velocity, reaction wheel angular momentum, and control torque simulation responses for various controllers. Only the 3\textsuperscript{rd} components of $\sigma$, $\omega$, $h_w$, and $u$ are shown.}\label{fig:spacecraft_sim_hw}
\end{figure}

\cref{fig:spacecraft_sim_hw} and \cref{table:spacecraft_performance} present the spacecraft results with reaction wheel momentum constraints active. The GHJB-CBF-QP closely tracks the OCP across all channels, while both CLF-CBF-QP controllers exhibit larger overshoots and slower convergence. The GHJB-CBF-QP also runs significantly faster than the OCP, as shown in \cref{table:spacecraft_performance}. Consistent with Proposition~\ref{prop:integrator}, the momentum constraints introduce almost no CBF conservatism ($\Delta_{\mathrm{CBF}} \approx 0$). The near-optimal cost further suggests that $\Delta_{\mathrm{proj}}$ is small in this case, as the relative-degree-one structure reduces the constraints to instantaneous bounds on $u$, for which pointwise projection suffices without trajectory-level lookahead.

\paragraph*{Case~2 (HOCBF): Forbidden pointing constraint}
Consider an instrument with boresight direction $b \in \R^3$ (body frame) that must avoid a bright celestial object in inertial direction $n \in \R^3$ by at least a half-cone angle $\theta$. The pointing constraint is $B(\sigma) = \cos\theta - b^\top R(\sigma)\, n \geq 0$~\cite{wang2025control}, where the rotation matrix from inertial to body frame is $ R(\sigma) = \frac{1}{1 + \sigma^\top \sigma}\big((1 - \sigma^\top \sigma)\, I_3 + 2\,\sigma\sigma^\top - 2\,[\sigma]_\times\big)$~\cite{schaub1996stereographic}.

Since $B$ depends only on $\sigma$ and $\dot{\sigma} = G(\sigma)\omega$, the first derivative $\dot{B} = \frac{\partial B}{\partial \sigma} G(\sigma)\omega$ does not contain $u$, so $L_g B = 0$ and the constraint has relative degree two. The HOCBF construction (Definition~\ref{def:hocbf}) applies with
\begin{align*}
\psi_0 &= B(\sigma), \qquad \psi_1 = L_f B + \alpha_1 B, \\
\psi_2 &= L_f^2 B + L_g L_f B \cdot u + (\alpha_1 + \alpha_2)\, L_f B + \alpha_1 \alpha_2\, B, 
\end{align*}
where $L_f B = \frac{\partial B}{\partial \sigma} G(\sigma)\omega$ and $L_g L_f B = \frac{\partial B}{\partial \sigma} G(\sigma) J^{-1}$. The constraint $\psi_2 \geq 0$ is affine in $u$ and enters the QP~\eqref{eq:ghjb_cbf_qp} directly. We use boresight $b = \tr{\begin{bmatrix} 0 & 0 & 1 \end{bmatrix}}$ (instrument along the body $+z$ axis), bright object direction $n = \tr{\begin{bmatrix} -0.47 & -0.19 & 0.86 \end{bmatrix}}$ (normalized), half-cone angle $\theta = \ang{15}$, and $\alpha_1 = \alpha_2 = 1$.

\begin{figure}[htp]
	\centering
	\includegraphics{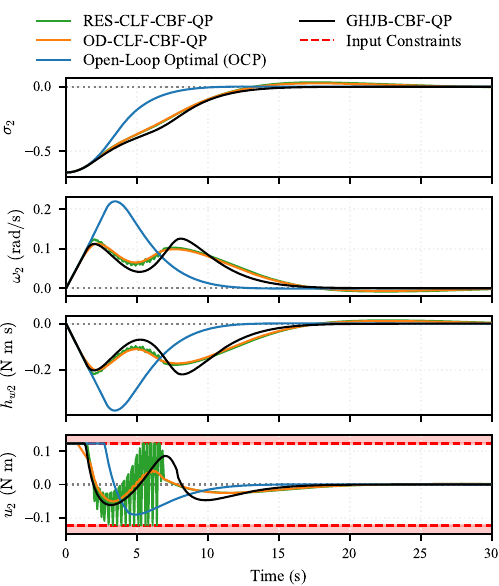}\vspace{-1mm}
	\caption{Representative attitude RPs, angular velocity, reaction wheel angular momentum, and control torque simulation responses for various controllers. Only the 2\textsuperscript{nd} components of $\sigma$, $\omega$, $h_w$, and $u$ are shown.}\label{fig:spacecraft_sim_pointing}
\end{figure}
\begin{figure}[htp]
	\centering
	\includegraphics{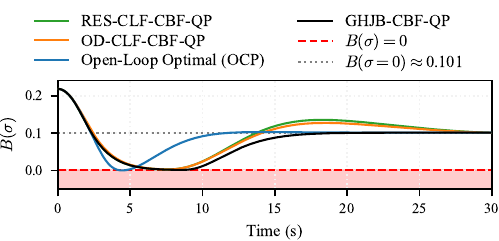}\vspace{-1mm}
	\caption{Pointing constraint $B(\sigma)$ simulation response for various controllers. Since $B(\sigma) > 0$, we are assured that the pointing constraint is satisfied during the trajectory.}\label{fig:pointing_constraint}
\end{figure}
\begin{figure}[htp]
    \vspace{1mm}
	\centering
	\includegraphics{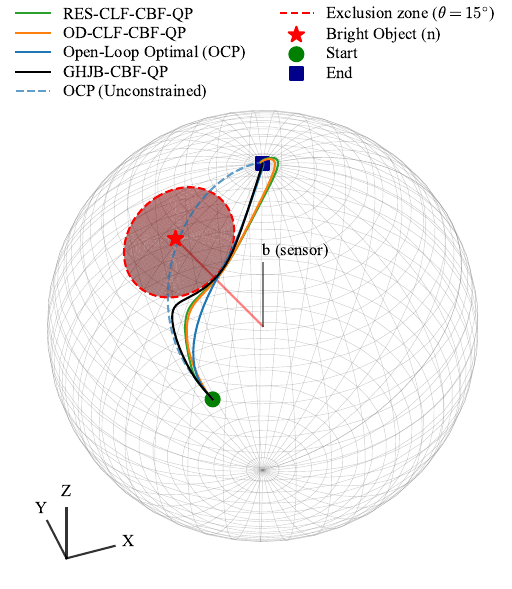}\vspace{-3mm}
    \caption{Sensor boresight trajectory on the celestial sphere. The red region denotes the exclusion zone of half-angle $\theta = \ang{15}$ around the bright object. All constrained controllers maintain $B(\sigma) \geq 0$ throughout the maneuver.}\label{fig:pointing_sphere_3d}
\end{figure}
\vspace{-1mm}

\begin{table}[ht!]
\renewcommand{\arraystretch}{1.2}
\centering
\caption{Cost and wall clock time comparison for the spacecraft attitude control problem under two constraint cases.
For OCP, this is the offline computation time. For the other control methods, this is the total online computation time (all iterations). 
Case~1: Reaction wheel momentum constraints (\cref{fig:spacecraft_sim_hw}). Case~2: Forbidden pointing constraint (\cref{fig:spacecraft_sim_pointing}).}\label{table:spacecraft_performance}
{\setlength{\tabcolsep}{4pt}%
\begin{tabular}{lcccc}
\toprule
    & \multicolumn{2}{c}{Case~1} & \multicolumn{2}{c}{Case~2} \\
    \cmidrule(lr){2-3} \cmidrule(lr){4-5}
    Control Method & Cost $J$ & Time (s) & Cost $J$ & Time (s) \\
    \midrule
    RES-CLF-CBF-QP  & $3.5841$ & $0.35$ & $4.1937$ & $0.35$ \\
    OD-CLF-CBF-QP   & $3.4755$ & $0.35$ & $4.0626$ & $0.36$ \\
    Open-Loop Optimal (OCP) & $3.2211$ & $3.66$ & $3.3768$ & $5.49$ \\
    \textbf{GHJB-CBF-QP}     & $\mathbf{3.2249}$ & $\mathbf{0.51}$ & $\mathbf{3.9842}$ & $\mathbf{0.53}$ \\
\bottomrule
\end{tabular}}
\end{table}

\cref{fig:spacecraft_sim_pointing,fig:pointing_constraint,fig:pointing_sphere_3d} and \cref{table:spacecraft_performance} present the spacecraft results with the forbidden pointing constraint active. The GHJB-CBF-QP closely tracks the OCP (\cref{fig:spacecraft_sim_pointing}), while the CLF-CBF-QP controllers exhibit larger transients, with the RES-CLF-CBF-QP additionally showing notable input chattering due to the CLF decay rate enforcement. \cref{fig:pointing_constraint} confirms that $B(\sigma) \geq 0$ holds at all times for all controllers. In \cref{fig:pointing_sphere_3d}, the open-loop OCP takes the shortest feasible path along the exclusion zone boundary. The GHJB-CBF-QP initially follows the unconstrained OCP path toward the exclusion zone, then converges back toward the constrained OCP path after passing it, remaining closer to the OCP throughout than the CLF-CBF-QP controllers. Unlike Case~1, the HOCBF auxiliary sets $\mathcal{C}_1 \cap\, \mathcal{C}_2$ are a strict subset of $\mathcal{C}_0$, so $\Delta_{\mathrm{CBF}} > 0$. The larger cost gap relative to Case~1 (\cref{table:spacecraft_performance}) reflects both this HOCBF conservatism and the relative-degree-two structure, which means optimal constraint satisfaction requires trajectory-level lookahead that a pointwise QP cannot replicate, making $\Delta_{\mathrm{proj}}$ a more significant source of suboptimality.
The source code reproducing all figures is available at \url{https://github.com/QCGroup/ghjbcbf}.
% 
%%%%%%%%%%%%%%%%%%%%%%%%%%%%%%%%%%%%%%%%%%%%%%%%%%%%%%%%%%%%%%%%%%%%%%%%%%%%%%%%
\section{Conclusion and Future Work}\label{sec:conclusion}

We presented a two-stage framework for constrained optimal control of input-affine nonlinear systems, in which an approximate value function is computed offline via policy iteration, and constraints are enforced online via a GHJB-based CBF-QP that is solvable in real time. The key architectural advantage is the clean decoupling of performance optimization from constraint enforcement: constraints can be added, removed, or modified without recomputing the value function. We established conditions under which the optimality gap is small, showing that for integrator-state constraints, i.e., box constraints of relative degree one with zero drift, the CBF introduces zero additional conservatism.

Future work includes extension to the finite-horizon setting with time-varying value functions, a more rigorous study of the optimality gap, and deriving tighter gap upper bounds.

\bibliographystyle{IEEEtran}
\bibliography{bibabbrv}

\end{document}